\DeclareSymbolFontAlphabet{\mathbbol}{bbold}
\DeclareSymbolFontAlphabet{\mathbb}{AMSb}
\theoremstyle{remark}
\newtheorem{theorem}{Theorem} 
\newtheorem{lemma}{Lemma}
\newtheorem{corollary}{Corollary} 
\newtheorem{remark}{Remark} 
\definecolor{blue_ref}{RGB}{46,48,146}
\newcommand{\m}[1]{\mathit{#1}}
\newcommand{\f}[1]{\mathrm{#1}}
\newcommand{\map}[1]{\ifcat\noexpand#1\relax#1\else{\mathcal{#1}}\fi}
\newcommand{\set}[1]{\mathbbol{#1}}
\newcommand{\onorm}[1]{{\left\vert\kern-0.25ex\left\vert\kern-0.25ex\left\vert #1 
    \right\vert\kern-0.25ex\right\vert\kern-0.25ex\right\vert}}
\newcommand{\eq}[2]{\begin{equation} \label{eq:#1} #2 \end{equation}}
\newcommand{\alsub}[2]{\begin{subequations}\label{eq:#1}\begin{align} #2 \end{align}\end{subequations}}
\let\leq\leqslant
\let\geq\geqslant
\newcommand{\qs}{S}
\newcommand{\id}[1]{\m{I}_{#1}}
\newcommand{\hil}{\mathscr{H}}
\newcommand{\los}{\mathfrak{L}}
\newcommand{\trc}{\mathrm{tr}}
\newcommand{\prob}{\mathbb{P}}
\newcommand{\expv}{\mathbb{E}}
\newcommand{\vari}{\mathbb{V}}
\newcommand{\cov}{\mathbb{C}\mathrm{ov}}
\newcommand{\cor}{\mathbb{C}\mathrm{or}}
\newcommand{\pcc}{\mathbb{R}}
\newcommand{\ent}{\mathbb{H}}
\newcommand{\mi}{\mathbb{I}}
\newcommand{\braf}[1]{\mbox{$\langle #1 |$}}
\newcommand{\ketf}[1]{\mbox{$| #1 \rangle$}}
\newcommand{\braketf}[2]{\mbox{$\langle #1 | #2 \rangle$}}
\def\HarvardSEAS{John A. Paulson School of Engineering and Applied Sciences, Harvard University, Cambridge, Massachusetts 02138, USA}
\def\ANU{Centre for Quantum Computation and Communication Technology,
Department of Quantum Science, Australian National University, Canberra, ACT 2601, Australia}
\def\ASTAR{A*STAR Quantum Innovation Centre (Q.InC), Institute of Materials Research and Engineering (IMRE), Agency for Science, Technology and Research (A*STAR), 2 Fusionopolis Way \#08-03 Innovis, 138634, Singapore}
\def\UCLA{Physical Sciences, College of Letters and Science, University of California, Los Angeles (UCLA), CA, USA}
\begin{document}

\title{Quantifying total correlations in quantum systems through the Pearson correlation coefficient}

\author{Spyros Tserkis}
\email{spyrostserkis@gmail.com}
\affiliation{\HarvardSEAS}
\affiliation{\UCLA}
\author{Syed M. Assad}
\affiliation{\ANU}
\affiliation{\ASTAR}
\author{Ping Koy Lam}
\affiliation{\ASTAR}
\affiliation{\ANU}
\author{Prineha Narang}
\email{prineha@ucla.edu}
\affiliation{\HarvardSEAS}
\affiliation{\UCLA}


\begin{abstract}
Conventionally the total correlations within a quantum system are quantified through distance-based expressions such as the relative entropy or the square-norm. Those expressions imply that a quantum state can contain both classical and quantum correlations. In this work, we provide an alternative method to quantify the total correlations through the Pearson correlation coefficient. Using this method, we argue that a quantum state can be correlated in either a classical or a quantum way, i.e., the two cases are mutually exclusive. We also illustrate that, at least for the case of two-qubit systems, the distribution of the correlations among certain locally incompatible pairs of observables provides insight in regards to whether a system contains classical or quantum correlations. Finally, we show how correlations in quantum systems are connected to the general entropic uncertainty principle.
\end{abstract}

\maketitle

\section{Introduction}

The correlation between two random variables in classical probability theory is typically measured via either the Pearson correlation coefficient (PCC) \cite{Pearson_Galton_PRSL_95, Renyi_B_07} or the mutual information (MI) \cite{Cover_Thomas_B_06}. Given two probability distributions for two random variables, the PCC expression is based on the first and second statistical moments of the distributions, whilst the MI expression is based on the Kullback--Leibler divergence between the joint and marginal distributions. The two measures can be thought of as reciprocal to each other.

In quantum systems the quantification of correlations is predominately based on measures that generalize or modify the concept of MI, e.g., for the quantification of entanglement the Kullback--Leibler divergence is substituted by the notion of relative entropy~\cite{Umegaki_KMJ_62, Vedral_RMP_02}. On the other hand, PCC has been only recently used in the quantum context in order to witness entanglement~\cite{Maccone_Bruss_Macchiavello_PRL_15, Tserkis_etal_PLA_24} and to construct Bell inequalities~\cite{Pozsgay_etal_PRA_17, Huang_Vontobel_IEEE_21, Tserkis_etal_PLA_24}.

In this work, we introduce a PCC-based expression that quantifies the total correlations in a quantum system, generalizing previous works that were based on the covariance~\cite{Kothe_Bjork_PRA_07, Abascal_Bjork_PRA_07}. This method complements the existing ones by offering an alternative way to quantify correlations. The main advantage of this new method is the insight it provides in regards to the type of correlations existent in the quantum system based on the distribution of correlations among certain pairs of observables. Finally, the role of the uncertainty principle in regards to the type of correlations in quantum systems is revealed.

The paper is structured as follows. Section~\ref{sec:preliminaries} provides preliminary information about quantum systems, the notion of complementarity, and the different ways that correlations can be measured. In Section~\ref{sec:total_correlations} the new method for quantifying the total correlations in a quantum system is introduced, followed by Section~\ref{sec:total_correlations_qubits}, where two-qubit states are studied. A discussion of the results is given in Section~\ref{sec:discussion}. Finally, the work is concluded in Section~\ref{sec:conclusion}.

\section{Preliminaries}
\label{sec:preliminaries}

\subsection{Quantum Systems}

A pure state is represented by a normalized vector $\ket{\psi}$ in the Hilbert space $\hil_d$, with $d \in \set{N} : d \geq 2$. A mixed state is represented by a positive unit-trace operator $\qs$ that belongs in the set of all linear and bounded operators $\los(\hil_d)$. A necessary and sufficient condition for a state to be pure is to satisfy the equation $\tr ( \qs^2) = 1$. All states admit a spectral decomposition, $\qs = \sum_{k=0}^{d-1} p_k \ketbra{\psi_k}{\psi_k}$. An observable, associated with a physical property, is represented by a Hermitian operator, $\m{X} = \sum_{k=0}^{d-1} x_k \ketbra{x_k}{x_k} \in \los(\hil_d)$, where $\{ \ket{x_k} \}_{k=0}^{d-1}$ is an orthonormal basis in $\hil_d$. The eigenspectrum $\{ x_k \}_{k=0}^{d-1}$ corresponds to the outcomes of the rank-1 projective measurements $\{ \ketbra{x_k}{x_k} \}_{k=0}^{d-1}$, occurring with probabilities $\prob_{\qs}(x_k) \coloneqq \bra{x_k} \qs \ket{x_k}$. For an operator $\m{X}$, sample space $\set{\Omega}_{\m{X}}$ is the set of mutually exclusive and collectively exhaustive measurement outcomes. The Shannon entropy of an observable $\m{X}$ is defined as $\ent_{\qs}(\m{X}) \coloneqq - \sum_{k=0}^{d-1} \prob_{\qs}(x_k) \log_2 \prob_{\qs}(x_k)$.

\subsection{Complementarity}

\begin{figure}[t!]
\centering
\includegraphics[width=\columnwidth]{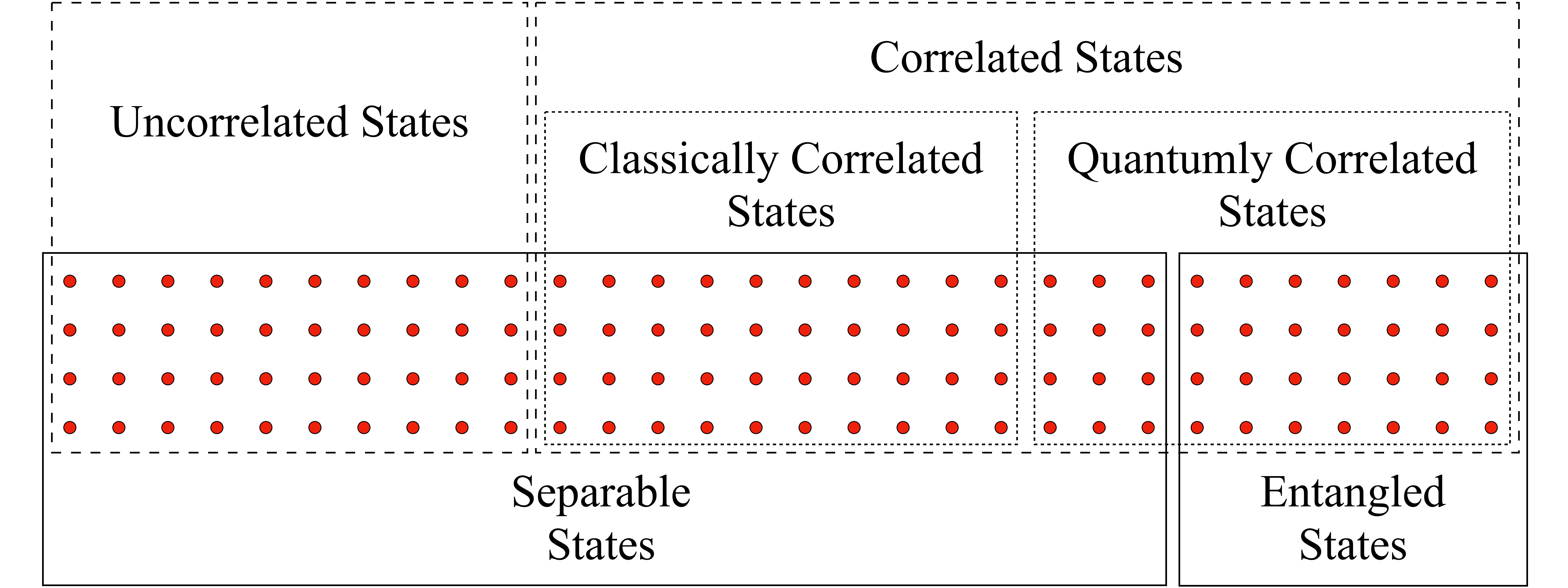}
\caption{Venn diagram showing how quantum states (represented for simplicity by a finite number of dots) can be classified according to the correlations existing in them. Every state is either correlated or uncorrelated. Correlated states are further divided into classically correlated and quantumly correlated states. A different type of classification for a state is that it is either separable or entangled.}
\label{fig:correlations}
\end{figure}

The general entropic uncertainty principle between two non-degenerate observables $\m{X}$ and $\m{Y}$ that act on the same Hilbert space $\los(\hil_d)$ is expressed by the following relation~\cite{Kraus_PRD_87, Maassen_Uffink_PRL_88}:
\eq{uncertainty_principle}{
\ent_{\qs}(\m{X})+ \ent_{\qs} (\m{Y}) \geq - \log \Big( \max\limits_{k, \ell} |\braket{x_k}{y_{\ell}}|^2 \Big) \,.
}
The term $-\log \big(\max_{k, \ell} |\braket{x_k}{y_{\ell}}|^2 \big) \in [0, \log d ]$ quantifies the incompatibility between the two observables, and the uncertainty principle asserts that the more certainty we have for the outcome of one observable the more uncertainty we have for the outcomes of the other, and vice versa (for other measures of incompatibility between the two observables see Refs.~\cite{Heinosaari_Miyadera_Ziman_JPA_16, Guhne_etal_RMP_23}). Systems that contain only compatible observables are considered classical, while quantum systems can contain both compatible and incompatible observables. Two maximally incompatible observables are called complementary~\cite{Schwinger_PNAS_60}, and have mutually unbiased bases (MUB), i.e.,
\eq{complementarity_relation}{
|\braketf{x_k}{y_{\ell}}|^2=\frac{1}{d}  \quad \forall \, k, \ell \in \{ 0, \cdots, d-1 \} \,.
}

The maximum number of pairwise complementary observables in $\los(\hil_d)$ is equal to the maximum number of MUB in $\hil_{d}$. For a Hilbert space $\hil_{d}$, complete is called a set of MUB when it has $d+1$ elements. It is an open question whether a complete set of MUB exists for arbitrary dimensions~\cite{Horodecki_Rudnicki_Zyczkowski_PRXQ_22}, however it is known that the maximum number of MUB for any Hilbert space cannot exceed $d+1$, and that a complete set can always be found when $d$ is a power of a prime number ~\cite{Ivonovic_JPAMG_81, Wootters_Fields_AN_89, Bandyopadhyay_etal_A_02}. For example, when $d=2$ the Pauli matrices constitute a complete set of three pairwise complementary observables. For complementary observables in higher dimensions see Refs.~\cite{Lawrence_PRA_04, Brierley_Weigert_Bengtsson_QIC_10}. The significance of MUB can be seen by the fact that measurements on them provide enough information to fully reconstruct a state~\cite{Adamson_Steinberg_PRL_10, Fernandez_Saavedra_PRA_11, Petz_Laszlo_RMP_12}.

\subsection{Correlations in Quantum Systems}

Any bipartite state $\qs \in \los(\hil_n) \otimes \los(\hil_m)$ can be written in the Fano form~\cite{Fano_RMP_83, Schlienz_Mahler_PRA_95} as follows:
\eq{fano_form}{
\qs = \frac{\id{} {\otimes} \id{} + \sum\limits_{k=1}^{d^2-1} n_k \m{\Lambda}_k {\otimes} \id{} + \sum\limits_{\ell=1}^{d^2-1} s_{\ell} \id{} {\otimes} \m{\Lambda}_{\ell} + \sum\limits_{k, \ell=1}^{d^2-1} t_{k \ell} \m{\Lambda}_k {\otimes} \m{\Lambda}_{\ell}}{d^2} \,,
}
where $d = \min \{ n,m\}$, $\id{}$ the identity operator, and $\{ \m{\Lambda}_k \}_{k=1}^{d^2-1}$ is a Hermitian operator basis with $\trc(\m{\Lambda}_k)=0$ and $\trc(\m{\Lambda}_k \m{\Lambda}_{\ell} ) = d \delta_{k \ell}$. The coefficients $n_k$, $s_{\ell}$, and $t_{k \ell}$ are expressed as follows:
\alsub{n_s_t_coeff}{
n_k &= \tr[\qs (\m{\Lambda}_k \otimes \id{}) ] \,, \\
s_{\ell} &= \tr[\qs (\id{} \otimes \m{\Lambda}_{\ell}) ] \,, \\
t_{k \ell} &= \tr[\qs (\m{\Lambda}_k \otimes \m{\Lambda}_{\ell}) ] \,.
}

Quantifying correlations in quantum systems is a challenging task given that, unlike in classical systems, there are different types of correlations~\cite{Groisman_Popescu_Winter_PRA_05, Li_Luo_PRA_07, Adesso_Bromley_Cianciaruso_JPA_16, Modi_etal_RMP_12}. More specifically, there are two types of bipartite states: (i) uncorrelated states, i.e.,
\eq{}{
\qs_{\text{prod}} = \qs^{(\text{A})} \otimes \qs^{(\text{B})} \,\, \text{with} \,\, \tr \bigl( \qs_{\text{prod}}^2 \bigl) < 1 \,,
}
and (ii) correlated states.

\begin{remark}
The concept of correlation is inapplicable to pure product states given that each side is by construction fixed and non-probabilistic. In practice, a value for pure product states for a given correlation measure can be assigned by considering an appropriate limit.
\end{remark}

Correlated states can be further classified as either: (i) classically correlated~\cite{Henderson_Vedral_JPAMG_01, Ollivier_Zurek_PRL_01, Oppenheim_etal_PRL_02}, that have the form:
\eq{classically_correlated}{
\qs_{\text{cl}} \coloneqq \sum_{k, \ell} p_{k \ell} \ketbra{e_k}{e_k} \otimes \ketbra{h_{\ell}}{h_{\ell}} \,,
}
where $\sum_{k, \ell} p_{k \ell} =1$, and $\{ \ket{e_k} \}$, $\{ \ket{h_{\ell}} \}$ being orthonormal bases, or (ii) quantumly correlated\footnote{Note that the clear dichotomy between classically and quantumly correlated states is not a distinction that all authors accept. For example, as we discuss in the section \ref{sec:mutual_information}, an alternative view is that a state can be partially classically correlated and partially quantumly correlated.}.

Uncorrelated and classically correlated states can be written as a convex combination of product states, i.e.,
\eq{separable_states}{
\qs_{\text{sep}} \coloneqq \sum_k p_k \qs_k^{(\text{A})} \otimes \qs_k^{(\text{B})} \,,
}
while this is not always true for quantumly correlated states. States that can be written in the form of Eq.~\eqref{eq:separable_states} are called separable, otherwise they are called entangled~\cite{Werner_PRL_89} (see Refs.~\cite{Horodecki_etal_RMP_09, Plenio_Virmani_B_14} for a review of entanglement theory). In Fig.~\ref{fig:correlations} we schematically represent all possible types of correlations in quantum systems.

Below, we introduce the two main ways correlations can be quantified in quantum systems.

\subsubsection{Mutual Information} \label{sec:mutual_information}

One way to quantify the different types of correlations for a given bipartite state $\qs$ is by invoking a distant measure, e.g., relative entropy~\cite{Umegaki_KMJ_62, Vedral_RMP_02} which for two states $\qs_k$ and $\qs_{\ell}$ is defined as
\eq{relative_entropy}{
\ent ( \qs_k \| \qs_{\ell}) \coloneqq \trc[\qs_k (\log_2 \qs_k - \log_2 \qs_{\ell})] \,,
}
and minimize it over a certain class of states\footnote{An alternative distant measure is the square-norm, defined as $\mathbb{D}(\qs_k,\qs_{\ell}) \coloneqq \norm{ \qs_k - \qs_{\ell} }_{\text{tr}}^2$~\cite{Dakic_Vedral_Brukner_PRL_10, Bellomo_etal_PRA_12}.}.

Based on this approach, quantum correlations, viz., discord, are measured as the minimum relative entropy of the state over all classically correlated states, $\f{Q}(\qs) \coloneqq \min_{\qs_\text{cl}} \ent ( \qs \| \qs_\text{cl})$,~\cite{Modi_etal_PRL_10}, and entanglement as the minimum relative entropy of the state over all separable states, viz., relative entropy of entanglement, $\f{E}(\qs) \coloneqq \min_{\qs_\text{sep}} \ent ( \qs \| \qs_\text{sep})$,~\cite{Vedral_etal_PRL_97}. The total correlations are measured as the minimum distance of the state over all uncorrelated states, which is equal to the mutual information $\mi_{\qs} (\text{A} : \text{B})$~\cite{Modi_etal_PRL_10}, i.e.,
\eq{mutual_information}{
\f{T}(\qs) \coloneqq \min_{\qs_\text{prod}} \ent ( \qs \| \qs_\text{prod}) = \ent ( \qs \| \qs^{(\text{A})} \otimes \qs^{(\text{B})}) = \mi_{\qs} (\text{A} : \text{B}) \,.
}
Classical correlations are defined as $\f{C}(\qs) \coloneqq \ent ( \qs \| \tilde{\qs}_{\text{prod}}) - \ent ( \qs \| \tilde{\qs}_{\text{cl}})$, where $\tilde{\qs}_{\text{cl}}$ is the closest classical state to $\qs$, and $\tilde{\qs}_{\text{prod}}$ is the closest product state to $\tilde{\qs}_{\text{cl}}$. Note that for pure product states the values of $\f{C}(\cdot)$, $\f{Q}(\cdot)$, and $\f{T}(\cdot)$ converge to 0 due to the limit $\lim\limits_{x \rightarrow 0} x \log_2 x = 0$.

Using this approach the total correlations of a state can be computed analytically, but this is not the case for classical correlations, quantum correlations, and entanglement (relative entropy of entanglement), for which in general no simple analytical expression is currently known. Note that this method implies that a quantum state can be partially correlated in a classical way and partially correlated in a quantum way, an idea that leads to the following issue.

\begin{remark}
The total correlations measured through $\f{T}(\qs)$ are not in general equal to the sum of the classical and quantum correlations: $\f{C}(\qs) + \f{Q}(\qs)$~\cite{Modi_etal_PRL_10}. An extra quantity to offset this imbalance can be introduced which however lacks a physical meaning.
\end{remark}

\subsubsection{Pearson Correlation Coefficient}

A different way to quantify the amount of correlations within a bipartite state is the Pearson correlation coefficient (PCC)~\cite{Maccone_Bruss_Macchiavello_PRL_15}. Contrary to the geometrical method, which is calculated through the quantum states that comprise a quantum system, the PCC is directly calculated through pairs of observables that can be measured in a quantum system. In particular, for two observables that act on two different Hilbert spaces, $\m{A} \in \los(\hil_d)$ and $\m{B} \in \los(\hil_d)$, that have sample spaces of equal cardinality, the PCC is defined as
\eq{pcc_definition}{ 
\cor_{\qs}(\m{A},\m{B}) \coloneqq  \frac{\cov_{\qs}(\m{A},\m{B})}{\sqrt{\vari_{\qs}(\m{A} ) \vari_{\qs}(\m{B})}} \,,
}
with $\cov_{\qs}(\m{A},\m{B}) \coloneqq \expv_{\qs}(\m{A} \otimes \m{B})  - \expv_{\qs}(\m{A}) \expv_{\qs}(\m{B})$ denoting the covariance between $\m{A}$ and $\m{B}$. $ \expv_{\qs}(\m{A})$, $ \expv_{\qs}(\m{B})$, and $ \expv_{\qs}(\m{A} \otimes \m{B})$ are the expectation values of $\m{A}$, $\m{B}$, and $\m{A} \otimes \m{B}$, respectively, and finally $\vari_{\qs}(\m{A} )$ and $\vari_{\qs}(\m{B})$ the variances of $\m{A}$ and $\m{B}$, respectively. For an arbitrary observable $\m{X}$ the expectation value is defined as $\expv_{\qs}(\m{X}) \coloneqq \trc( \qs \m{X})$ and the variance as $\vari_{\qs}(\m{X}) \coloneqq \expv_{\qs}(\m{X}^2) -\expv_{\qs}(\m{X})^2$. The range of the PCC lies within $[-1,1]$, where $+1$ implies perfect correlation and $-1$ perfect anti-correlation. For independent subsystems the PCC is zero for any pair of observables. For vanishing variances the PCC is undefined, e.g., the PCC for pure product states can lead to a singularity $0/0$.

In both classical and quantum regimes the PCC quantifies only linear relationships between random variables and observables, respectively. General monotonic relationships are quantified by the Spearman rank correlation coefficient~\cite{Spearman_BJP_06}, which is defined classically as the PCC between random variables with ranked values, and in the quantum context we can immediately generalize it as the PCC between observables with ranked eigenspectra, i.e., eigenspectra that can be ranked in an ascending or a descending order. Note that for two dimensional observables the PCC coincides with the Spearman rank correlation coefficient.

\section{Total Correlations Through PCC}
\label{sec:total_correlations}

\begin{figure}[t]
\centering
\includegraphics[width=0.8\columnwidth]{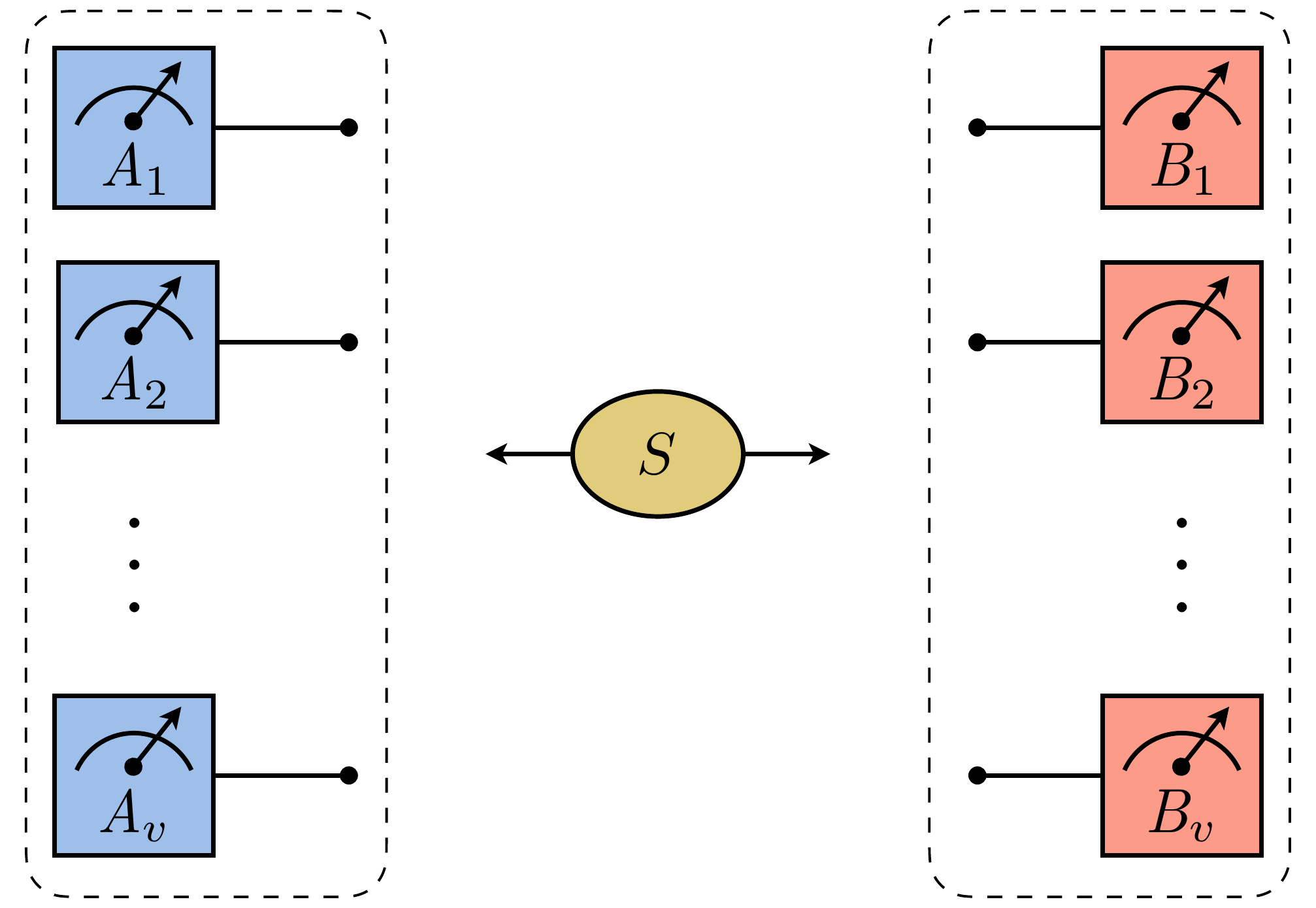}
\caption{A bipartite states $\qs$ measured in one Hilbert space by an observable $\m{A}_i$, with $k \in \{1, 2, \cdots, v \}$ and in the other Hilbert space by an observable $\m{B}_i$, with $k \in \{1, 2, \cdots, v \}$. Within each Hilbert space the observables are complementary.}
\label{fig:twosides}
\end{figure}

In this section we propose a method to quantify the total correlations in a quantum system based on the PCC. For an arbitrary bipartite state there are in general multiple pairs of observables that can be potentially correlated with each other. In order to capture all the available information in the state with the minimum amount of measurements, we consider the maximum set of complementary observables in the Hilbert space of each side as seen in Fig.~\ref{fig:twosides}. Also, in order to capture any possible type of correlations, i.e., not only linear, we consider observables with ordered eigenspectra. Thus, for the ordered observables $\{ (\m{A}_{i},\m{B}_{i}) \}_{i=1}^v$, where $v$ is the cardinality of the maximum set of pairwise complementary observables in the Hilbert space $\los(\hil_d)$, the sum of the correlations of a bipartite state $\qs$ is
\eq{sum_correlations}{ 
\sum_{i=1}^{v} \left| \cor_{\qs}(\m{A}_{i},\m{B}_{i}) \right| \,.
}
In Ref.~\cite{Sadana_etal_QIP_24} it was shown that for a specific set of observables and for certain families of states a sum of the PCC values is connected to the entanglement measure Negativity~\cite{Zyczkowski_etal_PRA_98, Lee_etal_JMO_00, Vidal_Werner_PRA_02}.

Even though it is reasonable for correlations to depend on the measured observables, in quantum information theory, expressions that quantify correlations are typically defined in such a way so they are invariant under local unitary transformations. Thus, we define the measure for total correlations as follows
\eq{total_correlations_measure}{
\pcc_{\qs}(\text{A} : \text{B}) \coloneqq \max_{\{\m{A}_i,\m{B}_i \}}   \sum_{i=1}^{v} \left| \cor_{\qs}(\m{A}_{i},\m{B}_{i}) \right| \,,
}
where the maximization is taken over all possible pairs of complementary observables. 

\begin{remark}
When the maximum number of complementary observables is not known, $v$ is substituted by the maximum known cardinality of the set of complementary observables, and the above expression becomes a lower bound of the total correlations. 
\end{remark}

It can be easily seen that for any bipartite state the total correlations are at least as large as the maximum correlations of a single pair of observables, i.e., $\pcc_{\qs}(\text{A} : \text{B}) \geq \max_{\{ \m{A} , \m{B} \}}  |\cor_{\qs}(\m{A} , \m{B})| $. In the section below we focus on two-dimensional bipartite states, i.e., two-qubit states, and show what is the implication when this inequality becomes an equality.

\section{Total correlations in two-Qubit States}
\label{sec:total_correlations_qubits}

Let us focus in this section on two-qubit states and two-dimensional observables. Given that the absolute PCC is invariant under affine transformations~\cite{Tserkis_etal_PLA_24}, without loss of generality we can always consider observables that have dichotomic outcomes: $\pm1$. Complementary observables with such eigenvalues are always orthogonal to each other~\cite{Tserkis_etal_PLA_24}, so any triplet of them constitutes a Hermitian operator basis. 

\begin{theorem}
A necessary condition for a two-qubit state to be classically correlated is the ability of its total correlations to be concentrated using local unitary operations in a single pair of observables, i.e.,
\eq{}{
\pcc_{\qs}(\text{A} : \text{B}) = \max_{\{ \m{A} , \m{B} \}}  \left| \cor_{\qs}(\m{A} , \m{B}) \right| \,.
}
\end{theorem}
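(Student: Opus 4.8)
The plan is to exploit the local-unitary invariance of both sides of the claimed identity and then reduce the problem to an optimization over two probability simplices. Both $\pcc_{\qs}(\text{A}:\text{B})$ and $\max_{\{\m{A},\m{B}\}}|\cor_{\qs}(\m{A},\m{B})|$ are invariant under local unitaries, so I may first apply local unitaries that rotate the orthonormal bases $\{\ket{e_k}\}$ and $\{\ket{h_{\ell}}\}$ of a classically correlated $\qs_{\text{cl}}$ onto the computational basis, bringing it to the diagonal form $\qs_{\text{cl}}=\sum_{k,\ell}p_{k\ell}\ketbra{k}{k}\otimes\ketbra{\ell}{\ell}$. A direct evaluation of the Fano coefficients \eqref{eq:n_s_t_coeff} then shows that the correlation tensor is diagonal with a single nonvanishing entry, $t_{33}\neq 0$, and that the local Bloch vectors lie on the same axis, $n_3,s_3\neq 0$, with all remaining $n_k,s_{\ell},t_{k\ell}$ equal to zero.

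Next I would parametrize the observables. By the affine invariance of the absolute PCC I take dichotomic $\pm 1$ observables, so that each is a unit Bloch vector $\hat a\in\mathbb{R}^3$, $\m{A}=\sum_k \hat a_k\m{\Lambda}_k$ with $\m{A}^2=\id{}$; since dichotomic complementary observables are mutually orthogonal, a complementary triplet on each side corresponds to an orthonormal frame $\{\hat a_i\}_{i=1}^3$ (resp.\ $\{\hat b_i\}_{i=1}^3$). With the canonical form above, $\expv_{\qs}(\m{A}_i)=\alpha_i n_3$, $\expv_{\qs}(\m{B}_i)=\beta_i s_3$ and $\expv_{\qs}(\m{A}_i\otimes\m{B}_i)=\alpha_i\beta_i t_{33}$, where $\alpha_i=\hat a_i\cdot\hat z$ and $\beta_i=\hat b_i\cdot\hat z$, while $\vari_{\qs}(\m{A}_i)=1-\alpha_i^2 n_3^2$. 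Every cross term collapses and the sum reduces to
\[
\sum_{i=1}^{3}|\cor_{\qs}(\m{A}_i,\m{B}_i)|=|t_{33}-n_3 s_3|\sum_{i=1}^{3}\frac{|\alpha_i\beta_i|}{\sqrt{(1-\alpha_i^2 n_3^2)(1-\beta_i^2 s_3^2)}},
\]
subject to the frame constraints $\sum_i\alpha_i^2=\sum_i\beta_i^2=1$, which are just Parseval's identity for $\hat z$ expanded in each orthonormal frame.

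Writing $x_i=\alpha_i^2$, $y_i=\beta_i^2$, $\mu=n_3^2$ and $\nu=s_3^2$, the maximization runs over two simplices. I would bound the sum by the Cauchy--Schwarz inequality to decouple the A and B indices, obtaining the factor $\big(\sum_i \tfrac{x_i}{1-\mu x_i}\big)^{1/2}\big(\sum_i \tfrac{y_i}{1-\nu y_i}\big)^{1/2}$, and then note that $\phi(x)=x/(1-\mu x)$ is convex on $[0,1]$ with $\phi(0)=0$, so $\sum_i\phi(x_i)$ is a convex function on the simplex and is maximized at a vertex, giving $\sum_i x_i/(1-\mu x_i)\le 1/(1-\mu)$ and likewise for $\nu$. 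Hence the three-observable sum never exceeds $|t_{33}-n_3 s_3|/\sqrt{(1-n_3^2)(1-s_3^2)}$, which is exactly the value produced by the single concentrated pair $\hat a=\hat b=\hat z$; since a one-variable monotonicity check shows the same pair maximizes $\max_{\{\m{A},\m{B}\}}|\cor_{\qs}(\m{A},\m{B})|$, the two quantities coincide and the equality follows.

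The step I expect to be the main obstacle is the simplex optimization: proving that distributing the measurement frame over several complementary directions cannot beat concentrating on the correlation axis. The Cauchy--Schwarz-plus-convexity argument settles it, but one must check that its equality conditions are jointly attainable---they are, because setting $x_1=y_1=1$ annihilates every term but one and so saturates Cauchy--Schwarz trivially---and that the boundary $n_3^2,s_3^2\to 1$, where a marginal becomes pure and the variances vanish, is precisely the product-state degeneracy for which the PCC is defined only through the limiting prescription of the earlier remark and may therefore be set aside.
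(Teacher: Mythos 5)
Your proof is correct, and its skeleton matches the paper's own: reduce the classically correlated state by local unitaries to the diagonal Fano form with only $n_3$, $s_3$, $t_{33}$ nonzero, compute the PCC explicitly in terms of the overlaps of the measurement frames with the correlation axis, identify $|t_{33}-n_3 s_3|/\sqrt{(1-n_3^2)(1-s_3^2)}$ as the single-pair maximum (attained at $\m{A}=\m{B}=\m{\Lambda}_3$), and show that the sum over a complementary triplet cannot exceed this same value. Where you genuinely diverge is the finishing inequality. The paper first relaxes every denominator to its worst case, using $1-\tfrac{n_3^2}{4}\trc(\m{A}_i\m{\Lambda}_3)^2\geq 1-n_3^2$, and only then applies Cauchy--Schwarz together with Parseval's identity, $\sum_{i}\trc(\m{A}_i\m{\Lambda}_3)^2=4$, to the numerators alone. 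You instead apply Cauchy--Schwarz first, keeping the denominators intact, and then maximize each decoupled factor $\sum_i x_i/(1-\mu x_i)$ over the probability simplex by a convexity argument (a separable convex function attains its maximum at a vertex, giving $1/(1-\mu)$); Parseval enters for you as the frame constraint $\sum_i \alpha_i^2=1$. Both routes close the argument and land on the identical bound; yours is tighter at the intermediate stage but needs the extra observation that $\phi(x)=x/(1-\mu x)$ is convex on $[0,1]$ for $\mu<1$, whereas the paper's relaxation is more elementary. You also make explicit two points the paper leaves implicit: that the equality conditions of your two inequalities are jointly attainable (the vertex choice kills all but one term, so Cauchy--Schwarz is saturated trivially), and that the degenerate boundary $n_3^2\to 1$ or $s_3^2\to 1$, where a marginal becomes pure and the variance vanishes, must be excluded via the limiting prescription of Remark 1.
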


\begin{proof}
Every classically correlated two-qubit state [see Eq.~\eqref{eq:classically_correlated}] can be diagonalized through local unitary operations (which leave the total correlations invariant), so it can take the following form:
\eq{Fano_cl}{
\qs_{\text{cl}}= \frac{1}{4} \big( \id{} \otimes \id{} + n_3 \m{\Lambda}_3 \otimes \id{} + s_3 \id{} \otimes \m{\Lambda}_3 + t_{33}\m{\Lambda}_3 \otimes \m{\Lambda}_3 \big) \,,
}
where $\m{\Lambda}_3 = \ketbra{0}{0} - \ketbra{1}{1}$ \,.

The PCC for two arbitrary observables $(\m{A}, \m{B})$ is equal to
\eq{}{ 
\cor_{\qs_{\text{cl}}}(\m{A}, \m{B}) = \frac{\frac{t_{33}-n_3 s_3}{4} \trc(\m{A} \m{\Lambda}_3 ) \trc( \m{B} \m{\Lambda}_3 ) }{\sqrt{\left[ 1 - \frac{n_3^2}{4} \trc(\m{A} \m{\Lambda}_3 )^2 \right] \left[ 1 - \frac{s_3^2}{4} \trc(\m{B} \m{\Lambda}_3 )^2 \right] }} \,,
}
since $\expv_{\qs_{\text{cl}}}(\m{A} \otimes \id{}) = \frac{1}{2} n_3 \trc \left( \m{A} \m{\Lambda}_3 \right)$, $\expv_{\qs_{\text{cl}}}(\id{} \otimes \m{B}) = \frac{1}{2} s_3 \trc \left( \m{B} \m{\Lambda}_3 \right)$, and $\expv_{\qs_{\text{cl}}}(\m{A} \otimes \m{B}) = \frac{1}{4} t_{33}\trc \left( \m{A} \m{\Lambda}_3 \right) \trc \left( \m{B} \m{\Lambda}_3 \right)$. Given that $\trc(\m{A} \m{\Lambda}_3 ) \in [-2,2]$ and $\trc(\m{B} \m{\Lambda}_3 ) \in [-2,2]$, we have
\eq{}{ 
\max_{\{ \m{A} , \m{B} \}}  \left| \cor_{\qs_{\text{cl}}}(\m{A} , \m{B}) \right| = \frac{|t_{33}-n_3 s_3| }{\sqrt{\left( 1 - n_3^2 \right) \left( 1 - s_3^2 \right) }} \,,
}
that is reachable for $\m{A} = \m{B}=\m{\Lambda}_3$.

For the sum of correlations we have
\alsub{}{ 
&\sum_{i=1}^{3}  \frac{|t_{33} {-}n_3 s_3| |\trc(\m{A}_i \m{\Lambda}_3 ) \trc( \m{B}_i \m{\Lambda}_3 )| }{4\sqrt{\left[1 {-} \frac{n_3^2}{4} \trc(\m{A}_i \m{\Lambda}_3 )^2 \right] \left[ 1 {-} \frac{s_3^2}{4} \trc(\m{B}_i \m{\Lambda}_3 )^2 \right]}}  \\
& \leq \frac{|t_{33}-n_3 s_3|}{4\sqrt{(1-n_3^2)(1-s_3^2)}} \sum_{i=1}^{3} |\trc(\m{A}_i \m{\Lambda}_3 ) \trc( \m{B}_i \m{\Lambda}_3 )| \,. 
}
Using the Cauchy-Schwarz inequality, we obtain
\eq{}{
\sum_{i=1}^{3} |\trc(\m{A}_i \m{\Lambda}_3 ) \trc( \m{B}_i \m{\Lambda}_3 )| \leq \sqrt{ \sum_{i=1}^{3} \trc(\m{A}_i \m{\Lambda}_3 )^2  \sum_{i=1}^{3}\trc( \m{B}_i \m{\Lambda}_3 )^2} \,,
}
which, due to the Parseval’s identity, is simplified to 
\eq{}{
\sum_{i=1}^{3} |\trc(\m{A}_i \m{\Lambda}_3 ) \trc( \m{B}_i \m{\Lambda}_3 )| \leq 4 \,.
}
Thus, the maximum sum of correlations is equal to
\eq{}{ 
\pcc_{\qs_{\text{cl}}}(\text{A} : \text{B}) = \frac{|t_{33}-n_3 s_3|}{\sqrt{(1-n_3^2)(1-s_3^2)}} \,,
}
that is reachable for $\m{A}_i=\m{B}_i=\m{\Lambda}_i \,\, \forall i \in \{ 1,2,3\}$, which completes the proof.
\end{proof}

The above result implies the following sufficient condition for a two-qubit state to be quantumly correlated.

\begin{corollary}
A sufficient condition for a two-qubit state to be quantumly correlated is its total correlations to be over one.
\end{corollary}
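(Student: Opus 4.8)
The plan is to obtain the corollary as an immediate consequence of the theorem's contrapositive, together with the elementary fact that a single-pair PCC cannot exceed one in magnitude. Since the theorem shows that every classically correlated two-qubit state satisfies $\pcc_{\qs}(\text{A} : \text{B}) = \max_{\{ \m{A} , \m{B} \}} |\cor_{\qs}(\m{A} , \m{B})|$, the strategy is to argue that a value of the total correlations above one is inconsistent with this concentration property, and then to exclude the only remaining alternative, namely that the state is uncorrelated.

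First I would recall that the Pearson correlation coefficient of any single pair of observables lies in $[-1,1]$, so that
\[
\max_{\{ \m{A} , \m{B} \}} |\cor_{\qs}(\m{A} , \m{B})| \leq 1 \,.
\]
Substituting this bound into the equality furnished by the theorem shows that any classically correlated state must obey $\pcc_{\qs}(\text{A} : \text{B}) \leq 1$. Reading this contrapositively, a two-qubit state whose total correlations satisfy $\pcc_{\qs}(\text{A} : \text{B}) > 1$ cannot be classically correlated.

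Next I would rule out the uncorrelated case. Because the PCC vanishes on every pair of observables for a product state, such a state has $\pcc_{\qs}(\text{A} : \text{B}) = 0$; hence the hypothesis $\pcc_{\qs}(\text{A} : \text{B}) > 1 > 0$ already guarantees that the state is correlated. Appealing to the classification summarized in Fig.~\ref{fig:correlations}, where every correlated state is either classically or quantumly correlated, the exclusion of the classical option forces the state to be quantumly correlated, completing the argument.

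I do not anticipate a genuine obstacle, as all of the analytic content resides in the theorem and the corollary merely reassembles it. The one point requiring care is the boundary value: the argument relies on the strict inequality $\pcc_{\qs}(\text{A} : \text{B}) > 1$, since total correlations equal to exactly one are attained by some classically correlated states (for instance $\tfrac{1}{2}(\ketbra{00}{00} + \ketbra{11}{11})$) and therefore do not on their own certify quantum correlation.
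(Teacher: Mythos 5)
Your proposal is correct and follows essentially the same route as the paper: apply Theorem~1's concentration property for classically correlated states together with the bound $|\cor_{\qs}(\m{A},\m{B})| \leq 1$ for any single pair, and conclude by contraposition that total correlations exceeding one force quantum correlation. The only additions are your explicit exclusion of the uncorrelated (product) case and the boundary remark at exactly one, both of which the paper leaves implicit but which do not change the argument.
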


\begin{proof}
Based on Theorem 1 any classically correlated two-qubit state can concentrate its total correlations into a single pair of observables.  Given that the correlations of every single pair cannot exceed one it is easy to see that if the total correlations of a state exceed one then the state must be quantumly correlated.
\end{proof}

Let us consider now two-qubit states written in the standard form~\cite{Leinaas_Myrheim_Ovrum_PRA_06}, i.e.,
\eq{standard_form}{
\qs_{\text{sf}}= \frac{1}{4} \bigg( \id{} \otimes \id{} + \sum_{k=1}^{3} t_k \m{\Lambda}_k \otimes \m{\Lambda}_k \bigg) \,
}
Those are states that become maximally mixed when one of the two parts is traced out, and are also known as Bell-diagonal states, since they can be written as a mixture of the four Bell states. 

\begin{remark}
A two-qubit state in the standard form is classically correlated if and only if it has only one non-zero coefficient $t_k$~\cite{Dakic_Vedral_Brukner_PRL_10}. 
\end{remark}

\begin{lemma}
For a two-qubit state in the standard form, $\qs_{\text{sf}}$, the PCC is given by
\eq{stform}{
\cor_{\qs_{\text{sf}}}(\m{\Lambda}_i,\m{\Lambda}_i) = t_i \quad \forall i \in \{1, 2, 3 \} \,.
}
\end{lemma}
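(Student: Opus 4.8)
The plan is to evaluate the PCC of Eq.~\eqref{eq:pcc_definition} directly for the choice $\m{A} = \m{B} = \m{\Lambda}_i$ by computing the three moments it requires and assembling them. The only tools needed are the two defining properties of the operator basis fixed in the Fano form — tracelessness $\trc(\m{\Lambda}_k) = 0$ and orthogonality $\trc(\m{\Lambda}_k \m{\Lambda}_{\ell}) = 2 \delta_{k \ell}$ for $d=2$ — together with the factorization $\trc(\m{X} \otimes \m{Y}) = \trc(\m{X}) \trc(\m{Y})$.

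First I would compute the single-side means. Expanding $\expv_{\qs_{\text{sf}}}(\m{\Lambda}_i \otimes \id{}) = \trc[\qs_{\text{sf}} (\m{\Lambda}_i \otimes \id{})]$ against the standard form of Eq.~\eqref{eq:standard_form}, every resulting term carries a factor $\trc(\m{\Lambda}_i)$ or $\trc(\m{\Lambda}_k)$ and hence vanishes; the same argument gives $\expv_{\qs_{\text{sf}}}(\id{} \otimes \m{\Lambda}_i) = 0$. This simply reflects that standard-form states have maximally mixed marginals, so both local means are zero and the covariance collapses to the bare product expectation.

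Next I would evaluate $\expv_{\qs_{\text{sf}}}(\m{\Lambda}_i \otimes \m{\Lambda}_i)$. The identity term of $\qs_{\text{sf}}$ again drops out by tracelessness, and in the sum only the $k=i$ contribution survives: there the tensor-product trace factorizes as $\trc(\m{\Lambda}_i \m{\Lambda}_i)\,\trc(\m{\Lambda}_i \m{\Lambda}_i) = 2 \cdot 2 = 4$, so the term equals $\tfrac{1}{4}\, t_i \cdot 4 = t_i$. Combined with the vanishing means, this yields $\cov_{\qs_{\text{sf}}}(\m{\Lambda}_i, \m{\Lambda}_i) = t_i$.

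For the denominator I would use the Pauli relation $\m{\Lambda}_i^2 = \id{}$, so that $(\m{\Lambda}_i \otimes \id{})^2 = \id{} \otimes \id{}$ and $\expv_{\qs_{\text{sf}}}[(\m{\Lambda}_i \otimes \id{})^2] = \trc(\qs_{\text{sf}}) = 1$; with the zero mean this gives $\vari_{\qs_{\text{sf}}}(\m{\Lambda}_i) = 1$ on each side. Assembling the pieces, $\cor_{\qs_{\text{sf}}}(\m{\Lambda}_i, \m{\Lambda}_i) = t_i / \sqrt{1 \cdot 1} = t_i$. I do not expect a genuine obstacle: the statement is a direct consequence of the orthonormality of the basis and the maximally mixed marginals. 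The only point requiring care is the consistent bookkeeping of the tensor-product traces and the squaring identity $\m{\Lambda}_i^2 = \id{}$, which together force both variances to one and thereby identify the PCC with the coefficient $t_i$.
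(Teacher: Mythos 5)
Your proposal is correct and follows essentially the same route as the paper's proof: direct computation of the local means (zero, from tracelessness), the variances (one, since $\m{\Lambda}_i^2 = \id{}$ for traceless Hermitian qubit operators normalized by $\trc(\m{\Lambda}_k \m{\Lambda}_\ell) = 2\delta_{k\ell}$), and the joint expectation $\expv_{\qs_{\text{sf}}}(\m{\Lambda}_i \otimes \m{\Lambda}_i) = t_i$, then assembling the PCC. The paper compresses these steps into the phrase ``simple calculations''; your write-up supplies exactly the bookkeeping it omits.
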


\begin{proof}
Simple calculations lead us to $\expv_{\qs_{\text{sf}}}(\m{\Lambda}_i) = 0$ and $\vari_{\qs_{\text{sf}}}(\m{\Lambda}_i) = 1 $ for observables belonging in each partition. We also have $\expv_{\qs_{\text{sf}}}(\m{\Lambda}_i \otimes\m{\Lambda}_i) = t_i \,\, \forall i \in \{1, 2, 3 \}$. Combining the above results we arrive at Eq.~\eqref{eq:stform}, which completes the proof.
\end{proof}

\begin{theorem}
A necessary and sufficient condition for a two-qubit state in the standard form to be classically correlated is the ability of its total correlations to be concentrated using local unitary operations in a pair of observables, i.e.,
\eq{}{
\pcc_{\qs_{\text{sf}}}(\text{A} : \text{B}) = \max_{\{ \m{A} , \m{B} \}}  \left| \cor_{\qs_{\text{sf}}}(\m{A} , \m{B}) \right| \,.
}
\end{theorem}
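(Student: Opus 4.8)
The plan is to reduce the whole statement to a $3\times 3$ matrix optimization by working in the Bloch representation. Every dichotomic $\pm1$ observable on a qubit has the form $\m{A} = \sum_{k=1}^{3} a_k \m{\Lambda}_k$ with $\sum_k a_k^2 = 1$, and likewise $\m{B} = \sum_{\ell=1}^{3} b_\ell \m{\Lambda}_\ell$ with $\sum_\ell b_\ell^2 = 1$. First I would record, exactly as in Lemma~1 and using $\trc(\m{\Lambda}_k \m{\Lambda}_\ell) = 2\delta_{k\ell}$ together with the absence of local terms in $\qs_{\text{sf}}$, that $\expv_{\qs_{\text{sf}}}(\m{A}) = \expv_{\qs_{\text{sf}}}(\m{B}) = 0$ and $\vari_{\qs_{\text{sf}}}(\m{A}) = \vari_{\qs_{\text{sf}}}(\m{B}) = 1$ (since $\m{A}^2 = \m{B}^2 = \id{}$). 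A short computation then yields $\cor_{\qs_{\text{sf}}}(\m{A},\m{B}) = \vec{a}^{\tra} T \vec{b}$, where $\vec{a} = (a_1,a_2,a_3)^{\tra}$, $\vec{b} = (b_1,b_2,b_3)^{\tra}$, and $T = \mathrm{diag}(t_1,t_2,t_3)$.

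With this representation the single-pair quantity becomes $\max_{\{\m{A},\m{B}\}}|\cor_{\qs_{\text{sf}}}(\m{A},\m{B})| = \max_{|\vec{a}|=|\vec{b}|=1}|\vec{a}^{\tra} T \vec{b}|$, which is the operator norm of $T$, namely $\max_i |t_i|$. For the total correlations I would use that complementary $\pm1$ observables are mutually orthogonal, so that the triads $\{\vec{a}_i\}_{i=1}^{3}$ and $\{\vec{b}_i\}_{i=1}^{3}$ assemble into orthogonal matrices $O_A$ and $O_B$ (their orthonormality follows from $\trc(\m{A}_i \m{A}_j) = 2\,\vec{a}_i \cdot \vec{a}_j = 2\delta_{ij}$). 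Hence $\cor_{\qs_{\text{sf}}}(\m{A}_i,\m{B}_i) = (O_A T O_B^{\tra})_{ii}$, and $\pcc_{\qs_{\text{sf}}}(\text{A}:\text{B}) = \max_{O_A,O_B}\sum_{i=1}^{3} |(O_A T O_B^{\tra})_{ii}|$.

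The key step is to evaluate this last maximum, and it is the part I expect to demand the most care. Writing $M = O_A T O_B^{\tra}$ and choosing signs $\epsilon_i = \pm 1$ with $\epsilon_i M_{ii} = |M_{ii}|$, one has $\sum_i |M_{ii}| = \trc(E M)$ with $E = \mathrm{diag}(\epsilon_1,\epsilon_2,\epsilon_3)$; since $\|E\|_\infty = 1$, trace duality (Hölder for Schatten norms) gives $\sum_i |M_{ii}| \le \|M\|_1 = \sum_i \sigma_i(T) = \sum_i |t_i|$, the singular values being invariant under the orthogonal factors. This bound is saturated by $O_A = O_B = I$, i.e.\ $\m{A}_i = \m{B}_i = \m{\Lambda}_i$, so $\pcc_{\qs_{\text{sf}}}(\text{A}:\text{B}) = \sum_{i=1}^{3} |t_i|$, consistently with Lemma~1. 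Establishing this matrix inequality and confirming its tightness is the crux of the argument.

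Finally I would combine the two evaluations: the concentration condition $\pcc_{\qs_{\text{sf}}}(\text{A}:\text{B}) = \max_{\{\m{A},\m{B}\}}|\cor_{\qs_{\text{sf}}}(\m{A},\m{B})|$ reads $\sum_i |t_i| = \max_i |t_i|$, which holds precisely when at most one coefficient $t_i$ is nonzero. Invoking the Remark (a standard-form state is classically correlated if and only if exactly one $t_k \neq 0$) then closes both implications, the genuinely new content relative to Theorem~1 being \emph{sufficiency}. The only wrinkle is the degenerate case $t_1 = t_2 = t_3 = 0$, the maximally mixed uncorrelated state, where the equality holds trivially; I would dispose of it as an uncorrelated boundary case lying outside the scope of ``classically correlated.''
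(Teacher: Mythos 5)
Your proof is correct, and its skeleton is the same as the paper's: evaluate the two sides of the concentration condition as the spectral norm $\max_i|t_i|$ and the trace norm $\sum_i|t_i|$ of the diagonal correlation matrix, observe that their equality forces at most one nonzero $t_i$, and close with the remark that a standard-form state is classically correlated iff exactly one $t_k\neq 0$. Where you depart from the paper is in how the two norm evaluations are obtained, and the difference has real content. The paper specializes a general PCC formula and a trace-norm bound imported from an external reference --- a bound it quotes as holding \emph{for two-qubit separable states}, which is logically awkward in the sufficiency direction, where the state is not known a priori to be separable (the gap is harmless, since the achievability $\pcc_{\qs_{\text{sf}}}(\text{A}:\text{B}) \geq \sum_i |t_i|$ together with the single-pair evaluation already forces $\sum_i|t_i|\leq\max_i|t_i|$, but the paper does not say this). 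Your argument is self-contained and avoids the issue entirely: for standard-form states the local Bloch vectors vanish, so $\cor_{\qs_{\text{sf}}}(\m{A},\m{B})=\vec{a}^{\tra}T\vec{b}$ exactly; complementary $\pm1$ triads are parameterized by orthogonal matrices $O_A$, $O_B$; and the key bound $\sum_i |(O_A T O_B^{\tra})_{ii}|\leq \norm{T}_1$ is von Neumann/H\"older trace duality, valid for every state and saturated at $O_A=O_B=I$. This buys independence from the cited reference, and it also yields necessity and sufficiency in one stroke, without invoking Theorem 1. Finally, you flag the degenerate case $t_1=t_2=t_3=0$ (maximally mixed, hence uncorrelated rather than classically correlated, yet satisfying the equality trivially) --- an edge case the paper's proof passes over silently; excluding it as an uncorrelated boundary case, as you do, is the right resolution.
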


\begin{proof}
The necessity of this condition directly follows from Theorem 1, so we only need to prove the sufficiency of it. Consider two observables decomposed in the Pauli basis as follows: $\m{A}=a_0 \id{} + \sum_{k=1}^3 a_k \, \m{\Lambda}_k$ and $\m{B}=b_0 \id{} + \sum_{k=1}^3 b_k \, \m{\Lambda}_k$. Then, based on ~\cite{Tserkis_etal_PLA_24} the PCC can be expressed as
\eq{}{
\cor_{\qs}(\m{A} , \m{B}) = \frac{\braf{\check{a}} \m{C} \ketf{\check{b}}}{\sqrt{1- \braketf{\check{a}}{n}^2}\sqrt{1-\braketf{\check{b}}{s}^2}} \,,
}
where $\ketf{\check{a}} = [\check{a}_1, \check{a}_2,  \check{a}_3]^T$ and $\ketf{\check{b}} = [\check{b}_1, \check{b}_2, \check{b}_3]^T$ are two normalized vectors with $\check{a}_k = a_k / \sqrt{a_1 + a_2 + a_3}$ and $\check{b}_k = b_k / \sqrt{b_1 + b_2 +b_3}$. $\m{C}$ is a matrix with elements $[\m{C}]_{k \ell}=t_{k \ell} - n_k s_{\ell}$, and $\ket{n}$ and $\ket{s}$ the vectors $[n_1, n_2, n_3]^T$ and $[s_1, s_2, s_3]^T$, respectively [see Eqs.~\eqref{eq:n_s_t_coeff}]. Then, we have
\eq{}{
\cor_{\qs}(\m{A} , \m{B}) \leq \frac{\braf{\check{a}} \m{C} \ketf{\check{b}}}{\sqrt{1 - \braket{n}{n}} \sqrt{1 - \braket{s}{s} } } \,.
}
since $\braket{\check{a}}{n}^2 \leqslant \sum_{i=1}^3 \braket{\check{a}_i}{n}^2 = \sum_{i=1}^3 n_i^2 = \braket{n}{n}$, and analogously $\braketf{\check{b}}{s}^2 \leqslant \braket{s}{s}$. The spectral norm of $\m{C}$, $\norm{\m{C}}_2$, can be expressed as $\norm{\m{C}}_2 = \max\limits_{\bigl\{ \ketf{\check{a}} , \ketf{\check{b}} \bigl\}} |\braf{\check{a}} \m{C} \ketf{\check{b}}|$~\cite{Horn_Johnson_B_13}, so we obtain
\eq{spectral_norm}{
\max_{\{ \m{A} , \m{B} \}}  |\cor_{\qs}(\m{A} , \m{B})| \leq \frac{ \norm{\m{C}}_2}{\sqrt{1 - \braket{n}{n}} \sqrt{1 - \braket{s}{s}}} \,,
}
where $\norm{\m{C}}_2$ is equal to the maximum singular value of $\m{C}$. It was also shown in Ref.~\cite{Tserkis_etal_PLA_24}, that for two-qubit separable states we have
\eq{trace_norm}{
\pcc_{\qs}(\m{A}_{i} : \m{B}_{i} ) \leq \frac{ \norm{\m{C}}_{\text{tr}}}{\sqrt{1 - \braket{n}{n}} \sqrt{1 - \braket{s}{s}}} 
}
where $\norm{\m{C}}_{\text{tr}}$ denotes the trace norm of $\m{C}$, i.e., the sum of its singular values~\cite{Horn_Johnson_B_13}. 

For two-qubit states in the standard form, inequalities \eqref{eq:spectral_norm} and \eqref{eq:trace_norm} become:
\eq{}{
\max_{\{ \m{A} , \m{B} \}}  |\cor_{\qs_{\text{sf}}}(\m{A} , \m{B})| \leq \max\{|t_1|, |t_2|, |t_3|\}
}
and
\eq{}{
\pcc_{\qs}(\m{A}_{i} : \m{B}_{i} ) \leq \sum_{i=1}^3 |t_i| \,,
}
whose upper bounds are reachable according to Lemma 1 for $\m{A}_i = \m{B}_i = \m{\Lambda}_i$.

The condition $ \pcc_{\qs_{\text{sf}}}(\m{A}_{i} : \m{B}_{i} ) = \max_{\{ \m{A} , \m{B} \}}  |\cor_{\qs_{\text{sf}}}(\m{A} , \m{B})|$ is equivalent to the condition $\max\{|t_1|, |t_2|, |t_3|\} = \sum_{i=1}^3 |t_i|$, that is true only when the state is classically correlated (see Remark 4), which completes the proof.
\end{proof}

\begin{figure*}[t]
\centering
\includegraphics[width=\textwidth]{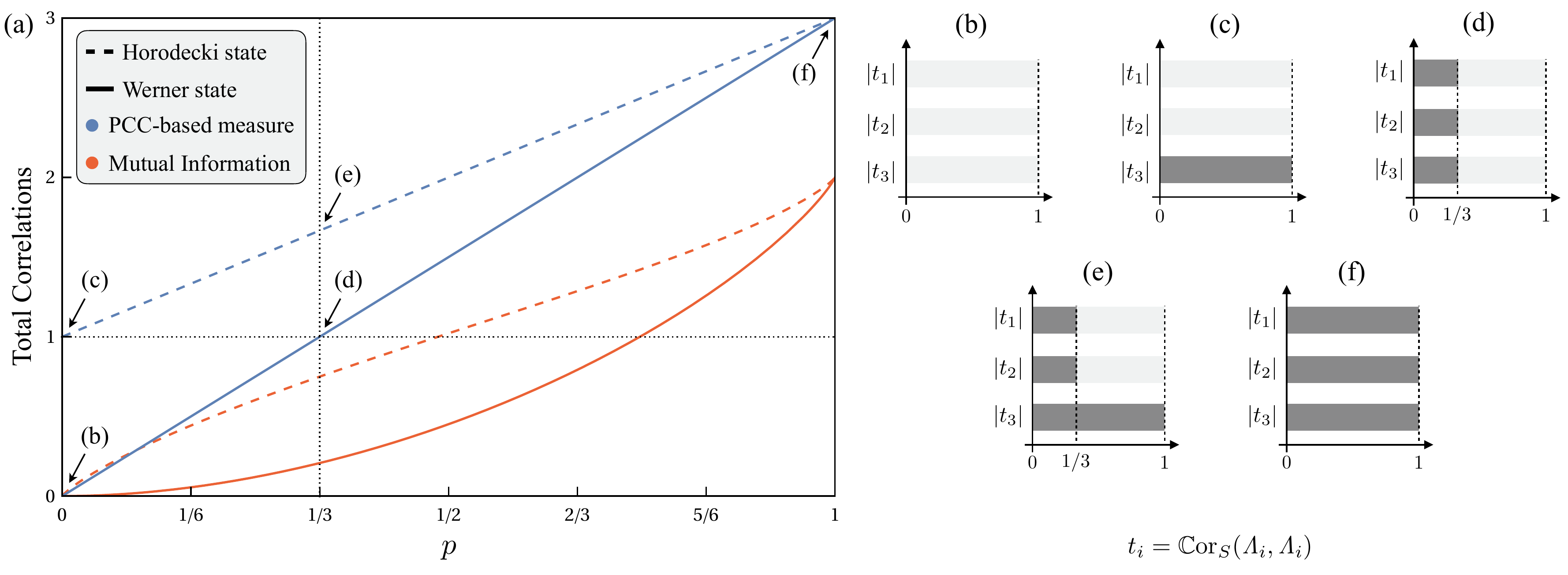}
\caption{In panel (a) the total correlations of Werner states $\qs_{\text{W}}$ and Horodecki states $\qs_{\text{H}}$ [see Eqs.~\eqref{eq:werner_state} and \eqref{eq:horodecki_state}] are quantified by the PCC-based measure $\pcc_{\qs}(\text{A} : \text{B})$ and the mutual information $\mi_{\qs}(\text{A}:\text{B})$. In panels (b)-(f) specific points are isolated and analyzed in terms of the PCC of each observable pair. It is clear that the PCC provides in general more granularity indicating how the correlations are distributed among the pairs of observables in the system.}
\label{fig:main_plot}
\end{figure*}

\section{Discussion}
\label{sec:discussion}

In Fig.~\ref{fig:main_plot} we compare the two ways total correlations can be quantified: through the expression $\pcc_{\qs}(\text{A} : \text{B})$ and through mutual information $\mi_{\qs}(\text{A}:\text{B})$. We pick two states for this comparison: the state
\eq{werner_state}{
\qs_{\text{W}} \coloneqq p \m{\Phi} + \frac{1-p}{4} \id{} \,,
}
namely the Werner state, where $\m{\Phi} = \ketbra{\phi}{\phi}$ is the Bell state with $\ket{\phi} = (\ket{00} + \ket{11})/\sqrt{2}$, and the state
\eq{horodecki_state}{
\qs_{\text{H}} \coloneqq p \m{\Phi} + (1-p) \m{\Pi}_0 \,,
} 
which is a variation of the so-called Horodecki state, where $\m{\Pi}_0 = \ketbra{00}{00}$. In panel (a) we plot the total correlations for both states through both methods. Note that the three optimal pairs of observables in Eq.~\eqref{eq:total_correlations_measure} for both states are: $( \m{\Lambda_1, \m{\Lambda}_1} )$, $ ( \m{\Lambda_2, \m{\Lambda}_2})$, and $( \m{\Lambda_3, \m{\Lambda}_3})$.

When $p = 1$ both states are in the Bell state $\m{\Phi}$, so both expressions reach their maximum value. In panel (f) we focus on the value $p=1$, for which the PCC of all three pairs reaches unity. For $1/3 < p \leq 1$, $\qs_{\text{W}}$ is entangled, and focusing in particular in the case for $p=1/3$ we see in panel (d) that the PCC's of all three pairs are equal to 1/3, summing up to 1, which is the separability threshold~\cite{Tserkis_etal_PLA_24}. On the other hand, the corresponding value $\mi_{\qs_{\text{W}}}(\text{A}:\text{B}) \simeq 0.21$ for $p=1/3$ does not constitute a separability bound. $\qs_{\text{H}}$ is entangled for any $p>0$, which is also reflected in panel (a) since $\pcc_{\qs_{\text{H}}}(\text{A} : \text{B})$ is always above the unity threshold. In panel (e) we see that $\qs_{\text{H}}$ for $p=1/3$ has perfect correlations for one of the three observable pairs while for the other two the correlations are equal to 1/3. In that case $\mi_{\qs_{\text{H}}}(\text{A}:\text{B}) \simeq 0.75$. When $p=0$, $\qs_{\text{W}}$ becomes the maximally mixed state, and both $\pcc_{\qs_{\text{W}}}(\text{A} : \text{B})$ and $\mi_{\qs_{\text{W}}}(\text{A}:\text{B})$ become equal to zero. In panel (b) we see how all PCC's vanish for $p=0$ for $\qs_{\text{W}}$. Interestingly, for $\qs_{\text{H}}$ when $p \rightarrow 0$, i.e., when $\qs_{\text{H}}$ approaches a pure product state, we have $\pcc_{\qs_{\text{H}}}(\text{A} : \text{B}) \rightarrow1$ and $\mi_{\qs_{\text{H}}}(\text{A}:\text{B}) \rightarrow 0$. This example shows why both quantification methods are valuable since they provide a different perspective on the quantification of correlations.

As we discussed in Remark 2, assessing correlations through distance-based measures leads to an issue, which can be avoided all together if we abandon the assumption that quantum states can simultaneously be classically and quantumly correlated. Using our method the distinction between classical and quantum correlated states depends on whether multiple pairs of observables are simultaneously correlated (recall that the pairs of observables are not arbitrary but within each partition the considered observables are mutually incompatible as seen in Fig.~\ref{fig:twosides}). For example using $\pcc$ the states in panels (c) and (d) have the same total correlations, however, the state in panel (c) is a classically correlated state because only one pair of observables is correlated, but the state in panel (d) is a quantumly correlated state since all three pairs are correlated. 

More broadly, our analysis suggests that different types of correlations appear as different distributions of the PCC values between pairs of observables. For two-qubit states in the standard form of Eq.~\eqref{eq:standard_form} we were able to show that the maximum total correlations can be concentrated in a single pair of observables if and only if the state is classically correlated. Numerical analysis provides us evidence that the same condition should hold for any two-qubit quantum state, even though a rigorous proof is currently missing. For higher-dimensional states it is yet unclear how exactly the distribution of the PCC values is related to the nature of the correlations in the system, and more work is needed in this direction. 

It is now clear how classical and quantum correlations are connected to the uncertainty principle, i.e., complementarity, and why entanglement is commonly referred to as a ``stronger type of correlation'' in comparison to the classical ones. Perfect correlations between two observables can exist in both classical and quantum systems. However, at least for states in the form of Eq.~\eqref{eq:standard_form} the fundamental difference between classical and quantum correlations is that, in the former, correlations can be concentrated within a single pair of complementary observables [e.g., see Fig.~\ref{fig:main_plot}(c)], whereas this is not possible in the latter. The concentration of correlations into a single pair of observables can be understood as follows: in classical systems---where only classical correlations can be present---properties are compatible to each other, so the maximum number of complementary observables per system is trivially equal to one. Thus, in classical systems the existence of multiple pairs of complementary observables is fundamentally prohibited, a restriction that does not apply to quantum systems.

\section{Conclusions}
\label{sec:conclusion}

In this work, we introduced a framework for the quantification of total correlations in quantum systems through the Pearson correlation coefficient. For two-qubit states in particular, we showed how quantum correlations are connected to the uncertainty principle, and how classically and quantumly correlated states can be distinguished based on the correlations between two specific sets of observables. These sets correspond to locally complementary observables, which are also useful for other tasks, such as quantum state estimation~\cite{Adamson_Steinberg_PRL_10, Fernandez_Saavedra_PRA_11, Petz_Laszlo_RMP_12}, entanglement detection~\cite{Spengler_etal_PRA_12, Maccone_Bruss_Macchiavello_PRL_15}, quantum cryptography~\cite{Bennett_Brassard_IEEE_84, Cerf_etal_PRL_02}, quantum retrodiction~\cite{Englert_Aharonov_PLA_01, Reimpell_Werner_PRA_07}. An interesting future path of this work would be to show that the condition in Theorem 1 is not only necessary but also sufficient, and potentially extend this result to higher-dimensional and/or multi-partite quantum states.

\section*{Acknowledgments}
This work is supported by the National Science Foundation under grant number NSF CNS 2106887 on ``U.S.-Ireland R\&D Partnership: Collaborative Research: CNS Core: Medium: A unified framework for the emulation of classical and quantum physical layer networks'', the NSF QuIC-TAQS program ``QuIC-TAQS: Deterministically Placed Nuclear Spin Quantum Memories for Entanglement Distribution'' under grant number NSF OMA 2137828, the NSF CAREER: First Principles Design of Error-Corrected Solid-State Quantum Repeaters under grant number \#1944085, the Australian Research Council Centre of Excellence CE170100012, Laureate Fellowship FL150100019, and the Australian Government Research Training Program Scholarship.

\bibliography{Bibliography/bibliography}

\end{document}